\begin{document}
\sloppy
\raggedbottom
\title{Boolean Matrix Logic Programming on the GPU}
%
%
\author{
Lun Ai\inst{1,2} 
}
\authorrunning{Ai}
%
\institute{Department of Computing, Imperial College London, London, UK \\
\and Department of Life Sciences, Imperial College London, London, UK\\
\email{lun.ai.public@gmail.com}
}
\maketitle              
\begin{abstract}
Traditional logic programming relies on symbolic computation on the CPU, which can limit performance for large-scale inference tasks. Recent advances in GPU hardware enable high-throughput matrix operations, motivating a shift toward parallel logic inference. Boolean Matrix Logic Programming (BMLP) introduces a novel approach to datalog query evaluation using Boolean matrix algebra, well-suited to GPU acceleration. Building on this paradigm, we present two GPU-accelerated BMLP algorithms for bottom-up inference over linear dyadic recursive datalog programs. We further extend the BMLP theoretical framework to support general linear recursion with binary predicates. Empirical evaluations on reachability queries in large directed graphs and the Freebase 15K dataset show that our methods achieve 1–4 orders of magnitude speed up over state-of-the-art systems. These results demonstrate that Boolean matrix-based reasoning can significantly advance the scalability and efficiency of logic programming on modern hardware. Source code is available on \href{https://github.com/lun-ai/BMLP.git}{https://github.com/lun-ai/BMLP.git}. 

\keywords{Logic programming \and Boolean matrix \and Datalog.}
\end{abstract}


\section{Introduction}
\label{sec:intro}

Datalog programs have been an important language for AI systems to integrate structured knowledge  \cite{afrati_linearisability_2003,ceri_datalog_1989,ILP30,evans_learning_2018,gupta_possible_2019,sato_abducing_2018}. Traditional datalog query evaluations primarily focus on symbolic computation. Evaluating first-order logic programs in tensor spaces \cite{cohen_tensorlog_2020,grefenstette_towards_2013,rocktaschel_end_end_2017,sato_embedding_2017,yang_differentiable_2017} has attracted great attention to exploit the fast computation of GPUs. Matrix operations \cite{sato_linear_2017} were shown to significantly improve the efficiency of datalog query evaluations. Despite the compatibility with off-the-shelf frameworks, embedding recursive datalog programs faces several challenges. Datalog programs can be approximated \cite{cohen_tensorlog_2020,grefenstette_towards_2013,rocktaschel_end_end_2017,rocktaschel_injecting_2015,yang_differentiable_2017}, but without recursion or with a limited recursion depth. Rewriting linear datalog programs in linear tensor algebra \cite{sato_linear_2017} also requires parameterised solutions. 

Recent work \cite{muggleton_hypothesizing_2023,ai_boolean_2024} explored Boolean matrix operations in evaluating a subset of datalog programs. These Boolean matrix operations are composable and do not approximate linear recursion. This is shown in Figure \ref{fig:motivation}, where Boolean matrix operations are combined to evaluate a recursive datalog program. In this paper, we expand \textit{Boolean Matrix Logic Programming} (BMLP) \cite{ai_boolean_2024}, a new query answering paradigm in logic programming, where we examine Boolean matrices as a means for logic programming on the GPU.


\begin{figure}[t]
    \centering
        \includegraphics[width=0.7\linewidth]{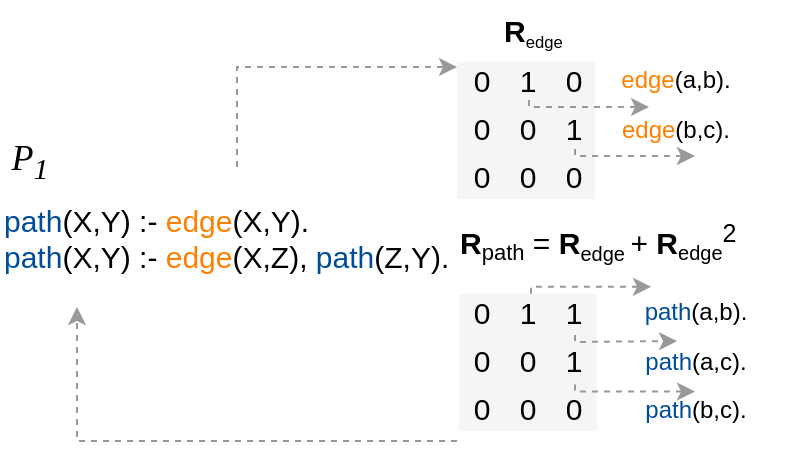} 
    \caption{$\mathcal{P}_1$ describes the transitive closure connectivity between two nodes. $\textbf{R}_{edge}$ and $\textbf{R}_{path}$ are Boolean matrices. }
    \label{fig:motivation}
\end{figure}

We introduce two novel BMLP algorithms for linear recursive programs in the same form as $\mathcal{P}_1$ as basic building blocks: repeated matrix squaring (BMLP-RMS) and selective matrix product (BMLP-SMP). We implemented these two BMLP algorithms in Python with PyTorch for the GPU and SWI-Prolog for the CPU. The runtime of these two BMLP algorithms has been studied theoretically and empirically. We focus on datalog programs where clauses contain predicates of arity at most two ($\mathcal{P}_1$ as an example program), namely the $H_2^2$ program class \cite{muggleton_meta_interpretive_2015}. This program class can be extended with functional semantics to acquire Universal Turning Machine expressivity \cite{tarnlund_horn_1977}. Theoretically, we show that BMLP algorithms can be combined to evaluate arity-two linear datalog programs. We demonstrate the composability of the BMLP approach by combining a BMLP algorithm with Boolean matrix operations and discuss the relationship to second-order programs. Our results present the potential of the BMLP approach to facilitate more efficient inductive \cite{dai_logical_2018,evans_learning_2018,muggleton_meta_interpretive_2015} and abductive learning \cite{dai_abductive_2021,sato_abducing_2018}.
\smallskip


\section{Background}

We refer the readers to \cite{minker_foundations_1988,ILP_foundation} for terminology on logic programs and datalog. Here a definite clause is referred to as a clause. A clause is range-restricted if every variable in the clause head appears among the variables in the clause body. A recursive clause has a predicate in its body that also appears in the head of some clause. A linear recursive clause (e.g. the second clause in $\mathcal{P}_1$) has at most one predicate in its body that appears in the head of some clause \cite{ioannidis_towards_1991}. A linear datalog program (e.g. $\mathcal{P}_1$) only contains linear clauses. The least model is the minimal set containing all facts that are entailed by $\mathcal{P}$. 


$H_n^2$ datalog programs \cite{cropper_logical_2020,muggleton_meta_interpretive_2015} contain predicates of arity at most two and at most $n$ literals in the body of each clause. We consider $H_n^2$ programs that contain binary relations. In relational algebra \cite{codd_relational_1970}, three relational operators are of importance \cite{ioannidis_towards_1991}, which are denoted $S$. In logic programming terms, \textit{cross product} concatenates arguments of one predicate with all arguments of another predicate. \textit{Selection} picks facts of one predicate based on some constraints. \textit{Projection} maps a predicate to a subset of its arguments according to some specifications. 

Finally, matrix algebra implements the transitive closure of binary relations \cite{ceri_datalog_1989}. A set of linear equations can describe the recursive program $\mathcal{P}_1$ in Section \ref{sec:intro}:
\begin{flalign}
    &\textbf{R}_2^{0} = \textbf{0} \qquad \textbf{R}_2^{k} = \textbf{R}_1 + \textbf{R}_1\textbf{R}_2^{k-1}
    \label{eq:closure}
\end{flalign}
where $(\textbf{R}_k)_{ij}$ = 1 if $\mathcal{P} \models r_k(c_i, c_j)$ for constants $c_i, c_j$ and $(\textbf{R}_k)_{ij}$ = 0 otherwise. The transitive closure $\textbf{R}_2^*$ = $\textbf{R}_1 (\textbf{I} + \textbf{R}_2^*)$ has a solution $\textbf{R}_2^*$ = $\sum_{k=1}^{\infty}$\textbf{R}$_1^{k}$ that gives $r_2$ in the least model \cite{aho_universality_1979}. 
Two matrices are equal $\textbf{A} = \textbf{B}$ if $\textbf{A}_{ij} = \textbf{B}_{ij}$ for all $i, j$. Boolean matrix addition ``+'' is defined as (\textbf{A} + \textbf{B})$_{ij}$ = \textbf{A}$_{ij}$ $\lor$ \textbf{B}$_{ij}$. Boolean matrix multiplication ``$\times$'' is defined as $(\textbf{A} \times \textbf{B})_{ij}$ = $\bigvee_{k=0}^n$ \textbf{A}$_{ik}$ $\land$ \textbf{B}$_{kj}$ and is abbreviated as \textbf{A}\textbf{B}. $\textbf{I}$ is the identity and $\textbf{0}$ is the all-zeros matrix. Logical operations ``$\land$'', ``$\lor$'', '``$\neg$''and their derivatives on matrices are performed element-wise, e.g. $(\textbf{A} \land \textbf{B})_{ij} = $ $\textbf{A}_{ij} \land \textbf{B}_{ij}$.


\section{Boolean matrix logic programming}
\label{sec:bmlp_problem}

In contrast to traditional logical inferences that manipulate symbols, BMLP emphasises \textit{programming} Boolean matrix operators. Combining matrix computation facilitates efficient datalog query answering. We present an example of the Boolean matrix encoding and two BMLP algorithm implementations. 
\subsection{Problem definition}
\begin{definition} [BMLP problem]
    Let $\mathcal{P}$ be a $H_n^2$ datalog program containing a set of clauses with predicate symbol $r$. The goal of Boolean Matrix Logic Programming (BMLP) is to output a Boolean matrix $\textbf{R}$ encoded in datalog such that $(\textbf{R})_{ij}$ = 1 if $\mathcal{P} \models r(c_i, c_j)$ for constants $c_i, c_j$ and $(\textbf{R})_{ij}$ = 0 otherwise.
\end{definition}

Boolean Matrix Logic Programming (BMLP) is general for monadic predicates, which can be expressed via dyadic predicates by repeating the single argument. Stored higher-arity predicates are expressible via multiple dyadic predicates. An arbitrary predicate representing input-output pairs of a computation can be mapped to composite constants such as $c_1\_c_2$. A subset of $H_n^2$ programs, namely the $H_2^2$ program class \cite{muggleton_meta_interpretive_2015}, has the same expressiveness as a Universal Turing Machine if it is extended with predicates that represent function input-output pairs \cite{tarnlund_horn_1977}. We can encode the least model as a Boolean matrix mapping constants to a subset of natural numbers. We assume matrices are stored in a database or can be derived directly from non-recursive clauses. 



\subsection{Boolean matrix representation}

A bijective function maps constants in a datalog program $\mathcal{P}$ to a subset of natural numbers $\mathbb{N}_0$. Mapped constants are totally ordered by $\leq$, so each constant is uniquely identified. We encode a Boolean matrix \textbf{R} in datalog to express a clause $r$. Every matrix row $(\textbf{R})_{i,*}$ is a fact $v(i,b_i)$ and $b_i$ is denoted by a binary code such that the j-th bit $(b_i)_j$ is 1 if $\mathcal{P} \models r(c_i, c_j)$ and 0 otherwise. 
\begin{example}
Consider $\mathcal{P}_1$ in Section \ref{sec:intro}. A Boolean matrix $\textbf{R}$ is created from $\{ edge(a,b), edge(b,c)\} \cup \mathcal{P}_1$. Constants $\{a,b,c\}$ are mapped to row and column indices $\{0,1,2\}$. The $v$ facts on the right represent rows in $\textbf{R}$.
\begin{figure}[h]
    \centering
    \vspace{-15pt}
        \includegraphics[width=0.6\linewidth]{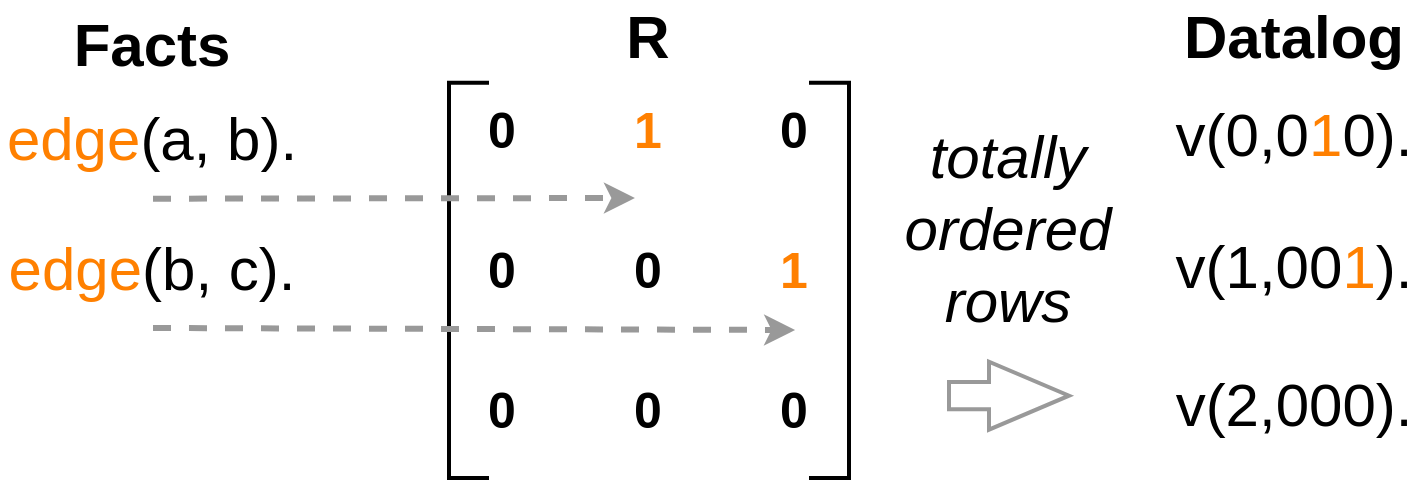} 
    \vspace{-15pt}
\end{figure}
\label{ex:matrix_representation}
\end{example}

When $c_i$ and $c_j$ ($0 \leq i, j < n$) come from the same set of constants, a Boolean matrix created this way is a $n \times n$ square matrix and an operator with the same domain and range. Vectors are treated as single-row matrices. 

\subsection{BMLP algorithms}

\begin{figure}[t]
    \centering
        \includegraphics[width=0.6\linewidth]{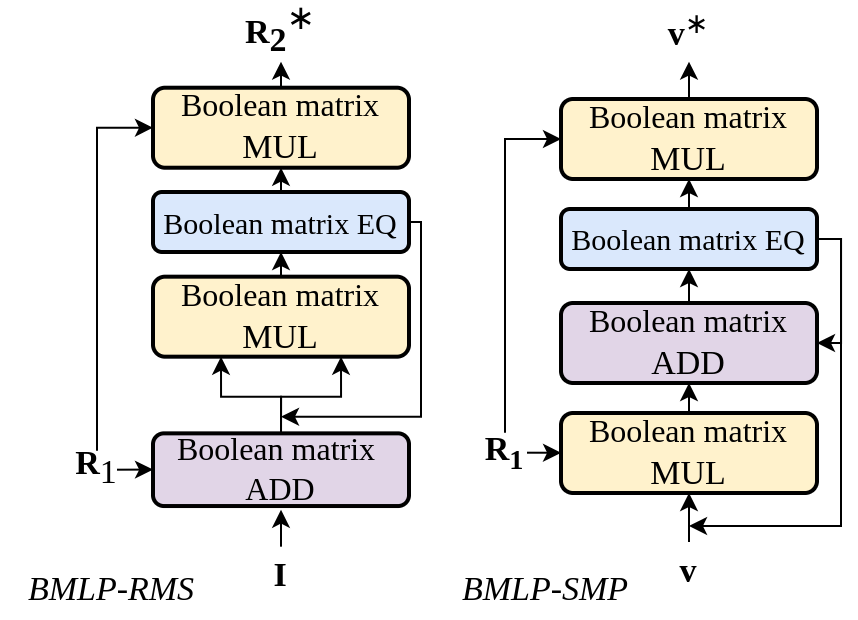} 
    \caption{BMLP algorithms: Repeated Matrix Squaring (BMLP-RMS) and Selective Matrix Product (BMLP-SMP). }
    \label{fig:modules}
\end{figure}

From a high level, each BMLP algorithm aims to map one set of facts to another set via multiple Boolean matrix operations. $\mathcal{C}^k_\mathcal{P}$ denotes the set of all derivable k-arity facts in a datalog program $\mathcal{P}$. This mapping is illustrated by Example \ref{ex:matrix_representation}.
\begin{definition} [BMLP algorithm]
    A BMLP algorithm is an operator $f:2^{\mathcal{C}^2_\mathcal{P}} \to 2^{\mathcal{C}^2_\mathcal{P}}$ where $\mathcal{C}^2_\mathcal{P}$ is encoded by Boolean matrices.
\end{definition}
Let $r_1$ be a stored or directly derivable non-recursive predicate. $r_2$ is a recursive predicate that depends on $r_1$. BMLP-SMP and BMLP-RMS (Figure \ref{fig:modules}) evaluate queries for the following linear recursive $H_2^2$ datalog program $\mathcal{P}_2$:
\begin{gather}
\begin{aligned}
    & r_2(X,Y) \gets r_1(X,Y).  \\
    & r_2(X,Y) \gets r_1(X,Z), r_2(Z,Y). \nonumber
\end{aligned}
\end{gather}
Boolean matrices $\textbf{R}_1$ and $\textbf{R}_2$ encode $r_1$ and $r_2$. BMLP-RMS evaluates all derivable groundings of $r_2$, while the BMLP-SMP algorithm only finds derivable facts from a partially grounded query $r_2(c, Y)$ given some constant $c$.  

\subsubsection{Repeated matrix squaring (BMLP-RMS).} BMLP-RMS computes $\sum_{k=0}^{\infty}$\textbf{R}$_1^{k}$ using the closed semiring of square Boolean matrices with Boolean matrix multiplication, addition, all-zeros matrix \textbf{0} and identity \textbf{I} \cite{droste_semirings_2009}. The transitive closure $\sum_{k=0}^{\infty}\textbf{A}^{k}$ of a squared Boolean matrix \textbf{A} is computable by logarithmic technique in $O(n^3 log_2 n)$ time \cite{fischer_Boolean_1971}: 
\begin{flalign}
    \sum_{k=0}^{\infty}\textbf{A}^{k} = \prod_{k=0}^{\infty} (\textbf{I} + \textbf{A}^{2^k}) = (\textbf{I} + \textbf{A}) (\textbf{I} + \textbf{A}^2) (\textbf{I} + \textbf{A}^4) \, ... 
    \label{eq:repeated_squaring}
\end{flalign} 
Given $n$ constants and facts of the predicate $r_1$, we compile a $n \times n$ matrix $\textbf{R}_1$ such that $(\textbf{R}_1)_{ij}$ = 1 if $\mathcal{P} \models r_1(c_i, c_j)$ and otherwise $(\textbf{R}_1)_{ij}$ = 0. We first consider an alternative form of $\sum_{k=0}^{\infty}$\textbf{R}$_1^{k}$ \cite{ioannidis_computation_1986}:
\begin{flalign}
    \sum_{k=0}^{\infty}\textbf{R}_1^{k} 
    = \lim_{k\to\infty} (\textbf{I} + \textbf{R}_1)^{k} 
    = (\textbf{I} + \textbf{R}_1) (\textbf{I} + \textbf{R}_1) \,...
    \label{eq:simple_recursion}
\end{flalign}

Equation (\ref{eq:simple_recursion}) performs Boolean matrix operations on the same elements in the matrix $(\textbf{I} + \textbf{R}_1)$. To avoid this, we take a similar approach to the logarithmic technique in Equation (\ref{eq:repeated_squaring}) to skip computations by repeatedly squaring matrix products. We re-arrange Equation (\ref{eq:simple_recursion}) as Equation (\ref{eq:repeated_squaring2}):
\begin{flalign}
    \lim_{k\to\infty} (\textbf{I} + \textbf{R}_1)^{k} 
    &=
    (\textbf{I} + \textbf{R}_1) \prod_{k=1}^{\infty}(\textbf{I} + \textbf{R}_1)^{2^k} \nonumber\\
    &=(\textbf{I} + \textbf{R}_1) (\textbf{I} + \textbf{R}_1) (\textbf{I} + \textbf{R}_1)^2 (\textbf{I} + \textbf{R}_1)^4 ... 
    \label{eq:repeated_squaring2}
\end{flalign}
Equation (\ref{eq:repeated_squaring2}) is advantageous over (\ref{eq:repeated_squaring}) and (\ref{eq:simple_recursion}) because it only has to compute the closure $(\textbf{I} + \textbf{R}_1)^{2^k}$ by squaring the matrix product $(\textbf{I} + \textbf{R}_1)^{2^{k-1}}$ from the previous iteration via a single matrix multiplication step. BMLP-RMS used the final result for the transitive closure  $\textbf{R}_2^* = (\sum_{k=0}^{\infty}\textbf{R}_1^{k})\textbf{R}_1$ in Equation (\ref{eq:closure}).

\subsubsection{Selective matrix product (BMLP-SMP).} Compared with BMLP-RMS, the BMLP-SMP algorithm aims to find a specific subset of the least model and does not compute irrelevant facts. For a partially grounded query $r_2(c, Y)$, the constant $c$ is encoded as a $1 \times n$ vector $\textbf{v}$. We consider a $n \times n$ square matrix $\textbf{R}_1$ having the same encoding as for BMLP-RMS. BMLP-SMP selects a subset of elements in $\textbf{R}_2^*$ via $\textbf{v}$:
\begin{flalign}
    \textbf{v}^* = \textbf{v}\textbf{R}_2^* = \textbf{v}\sum_{k=1}^{\infty}\textbf{R}_1^{k} = (\textbf{v} + \textbf{v}\sum_{k=1}^{\infty}\textbf{R}_1^{k}) \textbf{R}_1
    \label{eq:selection}
\end{flalign}
The selection in Equation (\ref{eq:selection}) can be push further into the summation $\textbf{v} + \sum_{k=1}^{\infty} (\textbf{v} \textbf{R}_1^{k-1}) \textbf{R}_1$. This is a Boolean matrix variant of the query-subquery evaluation \cite{vieille_recurswe_1986}. Its computational advantage is that it considers the constant provided in the query as early as possible, and the product of \textbf{v} with $\textbf{R}_1$ at iteration $k$ is calculated from the previous product. BMLP-SMP returns the subset of transitive closure solution $\textbf{v}*$ by multiplying the summation with $\textbf{R}_1$.


BMLP-RMS computes $\textbf{R}_2^*$ that correctly represents $r_2$ facts derivable in $\mathcal{P}_2$, and BMLP-SMP calculates $\textbf{v}^*$ that correctly describes $r_2$ facts derivable in $\mathcal{P}_2$ with the first argument $c$. These follow directly from computing the transitive closure solution $\sum_{k=1}^{\infty}$\textbf{R}$_1^{k}$, so we omit this trivial proof. 

\subsection{BMLP composability}

Multiple BMLP algorithms can be combined by computing, storing and reusing the output of algorithms. Here we focus on BMLP algorithms that compute linear $H_n^2$ datalog programs. 

\begin{definition} [Linear BMLP algorithms]
    A BMLP algorithm is called linear if it computes a linear $H_n^2$ datalog program.
    \label{def:linear_modules}
\end{definition}

BMLP-RMS and BMLP-SMP are linear algorithms by construction. Recall the relational operators in the set $S$, namely cross product, selection and projection. Any operator created from $S$ by concatenations or unions is a linear relational operator in a closed semiring \cite{ioannidis_towards_1991}. Boolean matrix operations in BMLP-RMS and BMLP-SMP have equivalent relation operators, since multiplications and additions are cross products and selections \cite{ioannidis_towards_1991}. 

\begin{theorem}
Every linear $H_n^2$ datalog program can be evaluated by a composition of linear BMLP algorithms. 
\label{theorm:BMLP_linear} 
\end{theorem}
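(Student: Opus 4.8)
The plan is to reduce the statement to the closed-semiring machinery already cited, showing that the least model of any linear $H_n^2$ program is the least fixpoint of a relational system that is linear in its recursive predicates, and that this fixpoint is a finite expression over boolean matrix addition, boolean matrix multiplication and transitive closure---each realised by a linear BMLP module. First I would put the program in a normal form: for each recursive predicate $r$ I gather its defining clauses, and linearity guarantees that every clause has the shape $r(X,Y) \gets B_{\mathrm{pre}}, s(U,V), B_{\mathrm{post}}$ with $s$ the unique recursive literal and $B_{\mathrm{pre}}, B_{\mathrm{post}}$ conjunctions of non-recursive, independently computable literals. Using the operators of $S$---cross product for conjunction, selection to enforce shared variables, and projection to discard auxiliary ones---I would fold each body into a relational expression linear in $s$, extracting its coefficients as boolean matrices.

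By the cited fact that every operator generated from $S$ by concatenation and union is a linear operator in a closed semiring, the collected equations then form a linear system $\vec{X} = \Phi(\vec{X})$ whose least solution the semiring expresses through its closure (star) operation; in the simplest chain case this is exactly $\vec{X} = \mathbf{A}^{*}\mathbf{B}$ with $\mathbf{A}^{*} = \sum_{k \ge 0}\mathbf{A}^{k}$, matching Equation (\ref{eq:closure}). Each coefficient matrix from the first step is produced by cross products, selections, projections and unions, i.e. by boolean matrix multiplications and additions, and since each computes a non-recursive (hence linear) $H_n^2$ program it is a linear BMLP module; the closure $\sum_{k\ge 0}\mathbf{A}^{k}$ is precisely what BMLP-RMS computes. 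Because modules store and reuse their outputs, I would chain them following the closed form: build the coefficient matrices with multiplication and addition modules, apply BMLP-RMS for the closure, then take a final product and union to obtain the complete least model.

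The hard part is the second step in full generality. When the recursive literal is flanked by non-recursive context on \emph{both} sides, the recurrence is bilinear, $\mathbf{R} = \mathbf{B} + \mathbf{P}\mathbf{R}\mathbf{Q}$, rather than the left-linear form BMLP-RMS directly handles. I would resolve this by re-encoding ordered pairs as composite constants, as the BMLP problem definition already permits: this vectorises $\mathbf{R}$ and turns $\Phi$ into an ordinary boolean matrix over the enlarged index set whose transitive closure BMLP-RMS can compute, and the same re-indexing applied block-wise flattens mutual recursion among several binary predicates into a single closure. The technical crux is then to verify that the arity-two restriction keeps every intermediate relation binary under this encoding, and that projection commutes with the closure so that auxiliary variables can be eliminated without disturbing the least model.
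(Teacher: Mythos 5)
Your proposal is correct in substance and rests on the same foundations as the paper's proof---the decomposition of linear datalog programs into the relational operators of $S$ and the closed-semiring structure that makes compositions of linear operators linear---but it is a genuinely more constructive development of that argument. The paper's proof is three sentences long: it cites \cite{ioannidis_towards_1991} for the fact that every linear program (range-restricted or not) is computed by concatenations or unions of operators from $S$, observes that BMLP modules realising such operators are linear by Definition \ref{def:linear_modules}, and closes under composition via the semiring. Everything you do explicitly---the normal form $r(X,Y) \gets B_{\mathrm{pre}}, s(U,V), B_{\mathrm{post}}$, the assembly of a linear fixpoint system over boolean matrices, its solution by the star operation realised by BMLP-RMS---is delegated in the paper to that single citation. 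The most valuable difference is that you confront the two-sided recurrence $\mathbf{R} = \mathbf{B} + \mathbf{P}\mathbf{R}\mathbf{Q}$, which arises when the recursive literal has non-recursive context on both sides and which left-linear closure does not directly handle; your resolution via the composite-constant (vectorisation) encoding is legitimate and is exactly what the BMLP problem definition's allowance for composite constants supports, whereas the paper never surfaces this issue at all. The cost of your route is that its crux remains an IOU: you flag but do not prove that the pair encoding keeps every intermediate relation binary and that projection commutes with the closure, so your argument is not yet complete as written---though that unverified step is precisely the content the paper's citation absorbs silently, so your attempt makes visible what the published proof leaves opaque.
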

\begin{proof}
Every linear datalog program with range-restricted or non-range-restricted clauses can be computed by concatenations or unions of linear relational operators in $S$ due to the closed semiring   \cite{ioannidis_towards_1991}. A BMLP algorithm corresponding to linear binary relation operators is linear by Definition \ref{def:linear_modules}. 
\end{proof}



We illustrate the composability of BMLP algorithms and Boolean matrix operations via a program that tells if a location is foreign to a region. We consider monadic predicates such as $location$ to indicate the type of entities. Alternatively, they can be represented as dyadic predicates (see Section \ref{sec:bmlp_problem}). Two non-recursive dyadic predicates $contains$ and $adjoins$ are encoded by $\textbf{R}_{contains}$ and $\textbf{R}_{adjoins}$, considering $\{g1,g2,g3,g4,t1,t2,t3\}$ as constants and $contains(t1,g2)$, $contains(g3,t1)$ and $adjoins(g3,g4)$ as facts. The first $H_2^2$ program $\mathcal{P}_3$ represents the transitive closure of the containment relation and is directly computable by the BMLP-RMS algorithm:
\begin{gather}
\begin{aligned}
hasPlace(X,Y) \gets &contains(X,Y). \\
hasPlace(X,Y) \gets &contains(X,Z), hasPlace(Z,Y).\nonumber
\end{aligned}
\end{gather}

Then, the following program $\mathcal{P}_4$ describes whether a location is in a neighbouring area of a region. This program has m-linear recursive clauses. Two clauses invert their arguments and require transposing matrices from the $adjoins$ and $hasPlace$ relations. 
\begin{gather}
\begin{aligned}
indirectlyPartOf(X,Y) \gets &adjoins(X,Y).\\
indirectlyPartOf(X,Y) \gets &adjoins(Y,X).\\
indirectlyPartOf(X,Y) \gets &hasPlace(Z,X), \\
&indirectlyPartOf(Z,Y).\nonumber
\end{aligned}
\end{gather}

The next program $\mathcal{P}_5$ has a single clause $isForeign$. This program contains a stratified negation \cite{dantsin_complexity_2001} and extends the datalog with a new body literal $not \, indirectlyPartOf(X,Y)$. This stratified negated body literal does not involve additional fixpoint computations, but through the difference between the all-ones matrix and the matrix of $indirectlyPartOf$. All groundings of $isForeign$ except $isForeign(t1,g4)$, $isForeign(g2,g4)$, $isForeign(g3,g4)$ and $isForeign(g4,g3)$ are true.
\begin{gather}
\begin{aligned}
isForeign(X,Y) \gets& location(X), location(Y), \\
&not \,\, indirectlyPartOf(X,Y). \nonumber
\end{aligned}
\end{gather}
The combination of linear programs $\mathcal{P}_3 \cup \mathcal{P}_4 \cup \mathcal{P}_5$ is represented by the composition of Boolean matrix operations and BMLP-RMS algorithm in Figure \ref{fig:composition}. The $isForeign$ predicate is encoded by the output matrix $\textbf{R}_{isForeign}$. This contrasts with tensor-based approaches such as \cite{sato_linear_2017} since it is difficult to solve algebraic equations arithmetically that represent m-linear recursive datalog programs. 
\begin{figure}[t]
    \centering
        \includegraphics[width=0.4\linewidth]{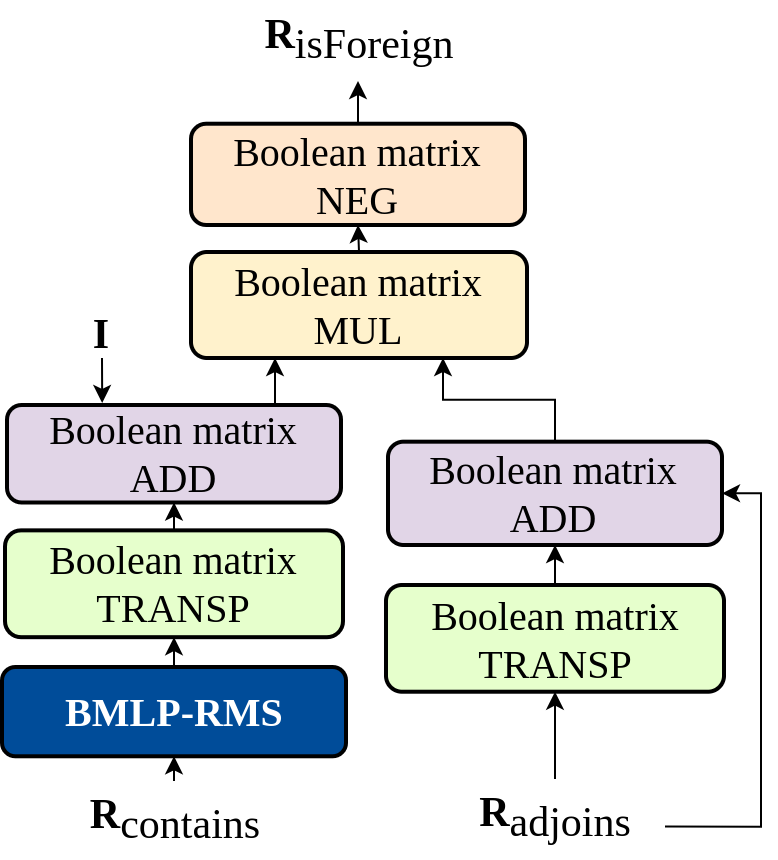} 
    \caption{Combining Boolean matrix operations with BMLP algorithm. TRANSP is the matrix transpose and NEG negates every bit in the input matrix.}
    \label{fig:composition}
\end{figure}

\subsection{BMLP algorithms as second-order programs} A second-order clause contains second-order variables that can be bound to predicate symbols. BMLP algorithms are quantified on matrices compatible in size, where each is bound to a predicate as in a second-order recursive clause. For example, the BMLP-RMS algorithm can represent the following program:
\begin{flalign*}
    & P(X,Y) \gets Q(X,Y).  \\
    & P(X,Y) \gets Q(X,Z), P(Z,Y).
\end{flalign*}
where $\{P,Q\}$ are existentially quantified second-order variables that are instantiated by $hasPlace$ and $contains$ in Figure \ref{fig:composition}. The composition of matrix operations and BMLP-RMS represents additional predicates. Predicate invention \cite{muggleton_machine_1988} aims to automatically derive new and reusable concepts in addition to the background knowledge. Reusing invented predicates can reduce the size of the hypothesis space and improve learning performance. Recent systems invent predicates through high-order language bias \cite{muggleton_meta_interpretive_2015} or constraint generation \cite{ILP30}. BMLP could benefit recent ILP systems to invent predicates by using Boolean matrices to represent two $H^2_2$ second-order clauses. 

\section{Implementation}
\label{sec:implementation}
We implemented BMLP-SMP, BMLP-RMS algorithms and Boolean matrix operators (addition, multiplication, transpose and negation) in Python PyTorch and SWI-Prolog, to accommodate different applications. The SWI-Prolog implementation uses bit-wise Boolean operations and targets single-thread computation on a CPU. On a GPU, the PyTorch implementation uses half-precision floating-point tensor computation from PyTorch with element-wise min-max clamps to emulate boolean matrix operations. Intermediate computation results are reusable to save computation resources using additional storage.


BMLP-RMS computes Equation (\ref{eq:repeated_squaring2}) via Algorithm \ref{alg:algorithm2}. We show that this algorithm has a time complexity of $O(n^3 log_2 n)$ (Appendix \ref{app:time_complexity}). The time complexity of matrix multiplication could be theoretically improved to $O(n^{2.376})$ \cite{coppersmith_matrix_1990}. However, this requires low-level optimisations which may not have a significant return in actual runtime. We do not investigate this tradeoff here. BMLP-SMP calculates Equation (\ref{eq:selection}) by Algorithm \ref{alg:algorithm1}. We show that its time complexity is $O(n^3)$ (Appendix \ref{app:time_complexity}). The vector multiplication derives new facts. The new facts are added to $\textbf{v}$ via Boolean matrix addition until transitive closure is reached.

\begin{algorithm}[t]
    \caption{Repeated matrix squaring (BMLP-RMS)}
    \label{alg:algorithm2}
    \textbf{Input}: A $n \times n$ Boolean matrix $\textbf{R}_1$, identity matrix \textbf{I}.\\
    \textbf{Output}: Transitive closure $\textbf{R}_2^*$.
    \begin{algorithmic}[1]
        \STATE Let $\textbf{R} = \textbf{I} + \textbf{R}_1$, $\textbf{R}_2^*$ = \textbf{0}.
        \WHILE{$True$}
        \STATE $\textbf{R}_2^* = \textbf{R}^2$.
        \STATE $\textbf{R} = \textbf{R}_2^*$ if $\textbf{R}_2^* \ne \textbf{R}$ else break.
        \ENDWHILE
        \STATE $\textbf{R}_2^* = \textbf{R}_2^*\textbf{R}_1$
    \end{algorithmic}
\end{algorithm}

\begin{algorithm}[t]
    \caption{Selective matrix product (BMLP-SMP)}
    \label{alg:algorithm1}
    \textbf{Input}: A $1 \times n$ vector $\textbf{v}$, a $n \times n$ Boolean matrix $\textbf{R}_1$ that encodes facts.\\
    \textbf{Output}: Transitive closure $\textbf{v}^*$.
    \begin{algorithmic}[1]
        \STATE Let $\textbf{v}' = \textbf{v}$.
        \WHILE{$True$}
        \STATE $\textbf{v}^* = \textbf{v}' + \textbf{v}'\textbf{R}_1$.
        \STATE $\textbf{v}' = \textbf{v}^*$ if $\textbf{v}^* \ne \textbf{v}'$ else break.
        \ENDWHILE
        \STATE $\textbf{v}^* = \textbf{v}^*\textbf{R}_1$
    \end{algorithmic}
\end{algorithm}


\section{Experiments}

\subsection{Benchmarks}
We evaluated BMLP algorithms on three benchmarks (Table \ref{tab:datsets}): two reachability tasks in directed graphs with cycles, and a sparse knowledge graph created from Freebase, Freebase 15K. We use a variant of Freebase 15K, FB15K-237 \cite{toutanova_observed_2015}, where inverse relations are not provided and must be evaluated. Directed graphs reachability is computed by the linear recursive $H_2^2$ datalog program $\mathcal{P}_1$. Direct graph datasets were prepared for two tasks: all derivable groundings (DG) and partially grounded queries (DG+partial). In DG, systems computed reachability from the query $path(X,Y)$. In DG+partial, systems evaluated reachability from one target node $path(c1, Y)$. We tested BMLP on sparse matrices on the dataset FB15K-237. We extracted entities as $location$, $contains$ and $adjoins$ relations as facts. We used the program $\mathcal{P}_3 \cup \mathcal{P}_4 \cup \mathcal{P}_5$ to compute derivable $isForeign$ facts.  

\subsection{Compared systems}
As baselines on CPU, we examined Souffle \cite{scholz_fast_2016}, Clingo \cite{gebser_clingo_2014}, B-Prolog \cite{zhou_language_2012} and SWI-Prolog \cite{wielemaker:2011:tplp} and compared with our BMLP SWI-Prolog implementation. Souffle is a datalog engine with state-of-the-art performance even when running on a CPU \cite{sun_optimizing_2025}. Answer Set Programming (ASP) system Clingo is commonly used in logic programming, but its semantics is defined by stable models of ground programs \cite{gelfond_stable_1988}. General-purpose Prolog systems SWI-Prolog and B-Prolog use tabling to ensure termination and faster query evaluation, and B-Prolog has state-of-the-art tabling performance. We tested our PyTorch implementation on Google Colab and used Souffle on Google Colab as a baseline.


\subsection{Preparation}
In DG and DG+partial, we varied the size of programs by $p, p_t \in [0, 1]$ where $p$ is randomly sampled to decide if an edge exists between two nodes $p \leq p_t$. Nodes and edges were mapped to constants and facts in datalog. DG, DG+partial and FB15K-237 datasets were then compiled into Boolean matrices for BMLP algorithms. We recorded each method's mean runtime in evaluating various queries. 

\begin{table}[t]
\caption{The number of nodes ($n$), edges ($k$) and edge sparsity in datasets ($p_t \in [0, 1]$). }
\centering
\begin{tabular}{ c | c | c | c }
 \hline
 Benchmarks & $n$ & $k$ & Sparsity\\ \hline
 DG & 5,000 & $<25,000,000$ & $1 - p_t$\\ 
 DG+partial & 5,000 & $<25,000,000$ & $1 - p_t$ \\
 FB15K-237 & 14,541 & 7,991 & $<0.004\%$\\
 \hline
\end{tabular}

\label{tab:datsets}
\end{table}


\begin{table*}[t]
    \centering
    \caption{Single thread runtimes in seconds. OT: runtime over 15000 seconds. System specification: Ubuntu/Intel(R) Core(TM) i9-7900X CPU @ 3.30GHz with 32GB RAM. }

    \begin{tabular}{l | c c c}
    \hline
     & & DG & \\
    & $p_t=0.01$ & $p_t=0.1$ & $p_t=0.5$ \\ \hline
    BMLP-RMS-SWI      & $281.69 \pm 5.49$ & $\underline{\textbf{248.78}} \pm 4.47$ & $\underline{\textbf{301.63}} \pm 6.39$ \\
    B-Prolog             & $188.23 \pm 1.78$ & $1810.54 \pm 5.13$ & $10148.88 \pm 78.73$ \\
    Clingo                   & $615.56 \pm 20.70$ & $5942.41 \pm 317.45$ & OT \\
    Souffle                 & $\underline{\textbf{61.43}} \pm 0.29$ & $324.91 \pm 0.62$ & $2810.35 \pm 7.79$ \\
    SWI-Prolog           & $843.0 \pm 11.61$ & $8838.07 \pm 41.71$ & OT \\
    \hline
    BMLP speed up & - & \underline{\textbf{0.3 - 35x}} & \underline{\textbf{9 - 49x}}\\
    \hline
    & & DG+partial & \\
    & $p_t=0.01$ & $p_t=0.1$ & $p_t=0.5$ \\ \hline
    BMLP-SMP-SWI     & $\underline{\textbf{0.61}} \pm 0.08$ & $\underline{\textbf{0.71}} \pm 0.09$ & $\underline{\textbf{0.85}} \pm 0.18$ \\
    B-Prolog             & $112.66 \pm 1.71$ & $1130.61 \pm 9.77$ & $6025.05 \pm 57.92$ \\
    Clingo                   & $584.79 \pm 6.5$ & $5976.29 \pm 15.23$ & OT \\
    Souffle                 & $58.09 \pm 1.31$ & $311.55 \pm 1.41$ & $2776.04 \pm 5.04$ \\
    SWI-Prolog           & $663.43 \pm 10.06$ & $7130.39 \pm 35.42$ & OT \\
    \hline
    BMLP speed up & \underline{\textbf{95 - 1087x}} & \underline{\textbf{438 - 10042x}} & \underline{\textbf{3265 - 17647x}}\\
    \hline
    \end{tabular}
    \label{tab:results_cpu}
\end{table*}

\begin{table*}[t]
    \centering
    \caption{Runtimes in seconds. $p_t$ is the probability of an edge between two nodes in a directed graph. System specification: Google Colab (Ubuntu)/Intel(R) Xeon(R) CPU @ 2.20GHz with 84GB RAM/NVIDIA A100 GPU with 40GB memory. }

    \begin{tabular}{l | c c c}
    \hline
     & & DG & \\
    & $p_t=0.01$ & $p_t=0.1$ & $p_t=0.5$ \\ \hline
    BMLP-RMS-PyTorch      & $\underline{\textbf{0.25}} \pm 0.00$ & $\underline{\textbf{0.25}} \pm 0.00$ & $\underline{\textbf{0.26}} \pm 0.00$ \\
    Souffle     & $19.29 \pm 0.19$ & $82.91 \pm 0.47$ & $600.93 \pm 0.50$ \\
    \hline
    BMLP speed up & \underline{\textbf{77x}} & \underline{\textbf{331x}} & \underline{\textbf{2311x}} \\
    \hline
    & & DG+partial & \\
    & $p_t=0.01$ & $p_t=0.1$ & $p_t=0.5$ \\ \hline
    BMLP-SMP-PyTorch     & $\underline{\textbf{0.24}} \pm 0.00$ & $\underline{\textbf{0.24}} \pm 0.00$ & $\underline{\textbf{0.25}} \pm 0.0$ \\
    Souffle     & $16.06 \pm 0.17$ & $86.50 \pm 0.97$ & $620.45 \pm 2.20$ \\
    \hline
    BMLP speed up & \underline{\textbf{66x}} & \underline{\textbf{360x}} & \underline{\textbf{2481x}} \\
    \hline
    \end{tabular}

    \vspace{0.5cm}

    \begin{tabular}{l | c}
    \hline
     & FB15K-237 \\ \hline
    BMLP-RMS-PyTorch       & $\textbf{2.09} \pm 0.04$ \\
    Souffle            & $44.69 \pm 1.38$ \\
    \hline
    BMLP speed up & \underline{\textbf{21x}} \\
    \hline
    \end{tabular}
    \label{tab:results}

\end{table*}

\begin{figure}[t]
    \centering
        \includegraphics[width=0.5\linewidth]{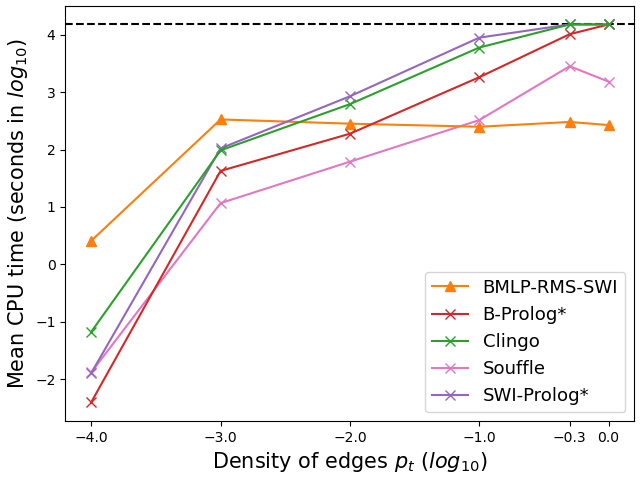} 
    \caption{CPU runtime baselines in DG. We kept $n=5000$ and varied edge density $p_t \in$ \{0.0001, 0.001, 0.01, 0.1, 0.5, 1\}. The grey line is bound at 15000 seconds. }
    \label{fig:exp_1_runtime}
\end{figure}

\begin{figure}[t]
    \centering
        \includegraphics[width=0.5\linewidth]{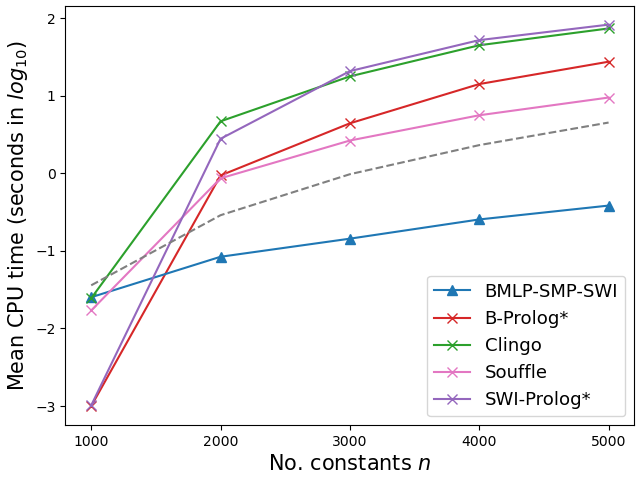} 
    \caption{CPU runtime baselines in DG+partial. We additionally changed $n$ from 1000 up to 5000 at $p_t$ = 0.001. The grey curve grows cubically in $n$.}
    \label{fig:exp_1_runtime_2}
\end{figure}

\subsection{Results} For the CPU baselines, BMLP-RMS and BMLP-SMP speed ups are significant in denser graphs ($p_t >= 0.1$) (Table \ref{tab:results_cpu}). The shortest paths between nodes have lengths closer to 1 in dense graphs, so BMLP-RMS requires fewer matrix operations to compute transitive closure. BMLP-RMS runtime is upper bounded in Figure \ref{fig:exp_1_runtime} with increasing graph density, which supports that BMLP-RMS's runtime is not affected by the number of facts (Proposition \ref{proposition:1}). In Figure \ref{fig:exp_1_runtime_2}, BMLP-SMP's runtime grows with the node number and is bounded by the cubic growth curve, which supports that BMLP-SMP is $O(n^3)$ (Proposition \ref{proposition:2}). 

On Google Colab GPU, BMLP-RMS and BMLP-SMP achieve 20x to 2000x speed up from Souffle, which has the best non-BMLP runtime in the CPU baselines. Owing to half precision computation, BMLP runtime only increases negligibly for increasingly dense datasets. The PyTorch BMLP implementation also enables significant speed ups when datasets are sparse.  


\section{Related work}

\noindent\textbf{Bottom-up datalog evaluation.}  Most systems \cite{afrati_linearisability_2003,beeri_power_1991,ceri_datalog_1989,hu_modular_2022,scholz_fast_2016,tekle_more_2011,vieille_recurswe_1986} used traditional symbol manipulations on the CPU. Obtaining the least model of graph-like datalog programs can be reduced to computing the transitive closure of Boolean matrices \cite{copilowish_matrix_1948,peirce_collected_1932}. Fischer and Meyer \cite{fischer_Boolean_1971} studied a logarithmic divide-and-conquer computation technique by viewing relational databases as graphs and showed a significant computational improvement over direct concatenation of matrix operators. A similar approach was explored by Ioannidis \cite{ioannidis_computation_1986} for computing the fixpoint of recursive Horn clauses. Compared with traditional Inductive Logic Programming (ILP) \cite{ILP1991} which searches for generalisable rules symbolically, Muggleton \cite{muggleton_hypothesizing_2023} employed the logarithmic technique in the DeepLog ILP system. It constructs the bottom clause iteratively by repeatedly squaring Boolean matrices. \smallskip 

\noindent\textbf{Datalog evaluation in tensor space.} Most work explored approximations of non-recursive queries. Rocktaschel et al. \cite{rocktaschel_injecting_2015} combined vector embeddings of entity pairs to evaluate relations with first-order constraints. Several neural-symbolic methods \cite{rocktaschel_end_end_2017,yang_differentiable_2017} computed first-order datalog queries through differentiable unification operators to construct neural networks. In a similar approach, Cohen et al.\cite{cohen_tensorlog_2020} used beliefs propagation for bounded-depth probabilistic inferences in stochastic logic programs, but determining the boundedness is undecidable \cite{gaifman_undecidable_1993}. In contrast, Grefenstette\cite{grefenstette_towards_2013} represented truth values of domain entities and logical operators in a first-order calculus framework, but does not deal with recursions. Sato et al. \cite{sato_embedding_2017,sato_linear_2017,sato_abducing_2018} showed that a subset of linear recursive datalog programs with arity two can be embedded in tensor space as linear algebraic equations. Compared to BMLP, this approach requires parameterised arithmetic computations. 




\section{Conclusion and future work}
We developed two Boolean matrix algorithms capable of evaluating linear recursive datalog on GPUs, building on the Boolean Matrix Logic Programming (BMLP) framework. Theoretically, we established the compositionality of BMLP for linear dyadic recursion, enabling modular reasoning over binary predicates. Empirical evaluations demonstrated that our GPU- and CPU-based implementations outperform state-of-the-art systems by 1 to 4 orders of magnitude. These results highlight the promise of BMLP as a scalable and efficient foundation for logic-based inference, with potential applications in inductive and abductive reasoning tasks. \smallskip

\noindent \textbf{Future work:} Future work may extend the BMLP framework to support multilinear datalog programs, enabling more expressive forms of recursion. Another promising direction involves adapting BMLP algorithms to dynamic databases, where updates and deletions occur over time. In this setting, Boolean matrices could serve as state representations of databases, facilitating efficient incremental computation \cite{hu_modular_2022}. In addition, the efficiency gains offered by BMLP could enhance the scalability of inductive and abductive reasoning, as suggested by recent work on grammar induction using search-based techniques \cite{valizadeh_search_based_2023}.


\section*{Acknowledgments}
The author acknowledges support from the UKRI 21EBTA: EB-AI Consortium for Bioengineered Cells \& Systems (AI-4-EB) award (BB/W013770/1). The author also acknowledges Prof. Stephen H. Muggleton for his valuable input on the initial idea.

\bibliographystyle{splncs04}
\bibliography{ijclr25}

\begin{thebibliography}{10}
\providecommand{\url}[1]{\texttt{#1}}
\providecommand{\urlprefix}{URL }
\providecommand{\doi}[1]{https://doi.org/#1}

\bibitem{afrati_linearisability_2003}
Afrati, F., Gergatsoulis, M., Toni, F.: Linearisability on datalog programs. Theoretical Computer Science  \textbf{308}(1),  199--226 (2003)

\bibitem{aho_universality_1979}
Aho, A.V., Ullman, J.D.: Universality of data retrieval languages. In: Proceedings of the 6th {ACM} {SIGACT}-{SIGPLAN} symposium on {Principles} of programming languages. pp. 110--119. {POPL} '79 (1979)

\bibitem{ai_boolean_2024}
Ai, L., Muggleton, S.H., Liang, S.S., Baldwin, G.S.: Boolean matrix logic programming for active learning of gene functions in genome-scale metabolic network models (2024), arXiv:2405.06724

\bibitem{beeri_power_1991}
Beeri, C., Ramakrishnan, R.: On the power of magic. The Journal of Logic Programming  \textbf{10}(3),  255--299 (1991)

\bibitem{ceri_datalog_1989}
Ceri, S., Gottlob, G., Tanca, L.: What you always wanted to know about {Datalog} (and never dared to ask). IEEE Transactions on Knowledge and Data Engineering  \textbf{1}(1),  146--166 (1989)

\bibitem{codd_relational_1970}
Codd, E.F.: A relational model of data for large shared data banks. Commun. ACM  \textbf{13}(6),  377--387 (1970)

\bibitem{cohen_tensorlog_2020}
Cohen, W., Yang, F., Mazaitis, K.R.: {TensorLog}: {A} {Probabilistic} {Database} {Implemented} {Using} {Deep}-{Learning} {Infrastructure}. Journal of Artificial Intelligence Research  \textbf{67},  285--325 (2020)

\bibitem{copilowish_matrix_1948}
Copilowish, I.M.: Matrix {Development} of the {Calculus} of {Relations}. The Journal of Symbolic Logic  \textbf{13}(4),  193--203 (1948)

\bibitem{coppersmith_matrix_1990}
Coppersmith, D., Winograd, S.: Matrix multiplication via arithmetic progressions. Journal of Symbolic Computation  \textbf{9}(3),  251--280 (1990)

\bibitem{ILP30}
Cropper, A., Dumancic, S., Evans, R., Muggleton, S.H.: Inductive logic programming at 30. Machine Learning  \textbf{111},  147--172 (2021)

\bibitem{cropper_logical_2020}
Cropper, A., Tourret, S.: Logical reduction of metarules. Machine Learning  \textbf{109}(7),  1323--1369 (2020)

\bibitem{dai_abductive_2021}
Dai, W.Z., Muggleton, S.H.: Abductive {Knowledge} {Induction} from {Raw} {Data}. In: Zhou, Z.H. (ed.) Proceedings of the {Thirtieth} {International} {Joint} {Conference} on {Artificial} {Intelligence}, {IJCAI}-21. pp. 1845--1851 (2021)

\bibitem{dai_logical_2018}
Dai, W.Z., Muggleton, S.H., Wen, J., Tamaddoni-Nezhad, A., Zhou, Z.H.: Logical {Vision}: {One}-{Shot} {Meta}-{Interpretive} {Learning} from {Real} {Images}. In: Lachiche, N., Vrain, C. (eds.) Inductive {Logic} {Programming}. pp. 46--62 (2018)

\bibitem{dantsin_complexity_2001}
Dantsin, E., Eiter, T., Gottlob, G.: Complexity and {Expressive} {Power} of {Logic} {Programming}. ACM Computing Surveys  \textbf{33}(3) (2001)

\bibitem{droste_semirings_2009}
Droste, M., Kuich, W.: Semirings and {Formal} {Power} {Series}. In: Droste, M., Kuich, W., Vogler, H. (eds.) Handbook of {Weighted} {Automata}, pp. 3--28 (2009)

\bibitem{evans_learning_2018}
Evans, R., Grefenstette, E.: Learning {Explanatory} {Rules} from {Noisy} {Data}. Journal of Artificial Intelligence Research  \textbf{61},  1--64 (2018)

\bibitem{fischer_Boolean_1971}
Fischer, M.J., Meyer, A.R.: Boolean matrix multiplication and transitive closure. In: 12th {Annual} {Symposium} on {Switching} and {Automata} {Theory} ({SWAT} 1971). pp. 129--131 (1971)

\bibitem{swipl}
Fruehwirth, T., Wielemaker, J., De~Koninck, L.: Swi prolog reference manual (2024)

\bibitem{gaifman_undecidable_1993}
Gaifman, H., Mairson, H., Sagiv, Y., Vardi, M.Y.: Undecidable optimization problems for database logic programs. J. ACM  \textbf{40}(3),  683--713 (1993)

\bibitem{gebser_clingo_2014}
Gebser, M., Kaminski, R., Kaufmann, B., Schaub, T.: Clingo = {ASP} + {Control}: {Preliminary} {Report}. Technical Communications of the Thirtieth International Conference on Logic Programming (ICLP’14)  \textbf{14}, ~1--9 (2014)

\bibitem{gelfond_stable_1988}
Gelfond, M., Lifschitz, V.: The {Stable} {Model} {Semantics} for {Logic} {Programming}. In: Proceedings of {International} {Logic} {Programming} {Conference} and {Symposium}. pp. 1070--1080 (1988)

\bibitem{grefenstette_towards_2013}
Grefenstette, E.: Towards a {Formal} {Distributional} {Semantics}: {Simulating} {Logical} {Calculi} with {Tensors}. In: {Proceedings} of the {Second} {Joint} {Conference} on {Lexical} and {Computational} {Semantics} ({SEM}). pp. 1--10 (2013)

\bibitem{gupta_possible_2019}
Gupta, S., Cheng, Y.Y., Ludascher, B.: Possible {Worlds} {Explorer}: {Datalog} and {Answer} {Set} {Programming} for the {Rest} of {Us}. In: {CEUR} {Workshop} {Proceedings}. pp. 44--55 (2019)

\bibitem{hu_modular_2022}
Hu, P., Motik, B., Horrocks, I.: Modular materialisation of {Datalog} programs. Artificial Intelligence  \textbf{308},  103726 (2022)

\bibitem{ioannidis_computation_1986}
Ioannidis, Y.E.: On the {Computation} of the {Transitive} {Closure} of {Relational} {Operators}. In: Proceedings of the 12th {International} {Conference} on {Very} {Large} {Data} {Bases}. pp. 403--411. {VLDB} '86 (1986)

\bibitem{ioannidis_towards_1991}
Ioannidis, Y.E., Wong, E.: Towards an algebraic theory of recursion. Journal of the ACM  \textbf{38}(2),  329--381 (1991)

\bibitem{minker_foundations_1988}
Minker, J.: Foundations of {Deductive} {Databases} and {Logic} {Programming} (1988)

\bibitem{ILP1991}
Muggleton, S.H.: Inductive logic programming. New Generation Computing  \textbf{8},  295–318 (1991)

\bibitem{muggleton_hypothesizing_2023}
Muggleton, S.H.: Hypothesizing an algorithm from one example: the role of specificity. Philosophical Transactions of the Royal Society A: Mathematical, Physical and Engineering Sciences  \textbf{381}(2251),  20220046 (2023)

\bibitem{muggleton_machine_1988}
Muggleton, S.H., Buntine, W.: Machine {Invention} of {First}-order {Predicates} by {Inverting} {Resolution}. In: Proceedings of the {Fifth} {International} {Conference} on {Machine} {Learning}. pp. 339--352 (1988)

\bibitem{muggleton_meta_interpretive_2015}
Muggleton, S.H., Lin, D., Tamaddoni-Nezhad, A.: Meta-interpretive learning of higher-order dyadic datalog: predicate invention revisited. Machine Learning  \textbf{100}(1),  49--73 (2015)

\bibitem{ILP_foundation}
Nienhuys-Cheng, S., Wolf, R.: Foundations of {I}nductive {L}ogic {P}rogramming. Springer-Verlag New York, Inc. (1997)

\bibitem{peirce_collected_1932}
Peirce, C.S.: Collected Papers of Charles Sanders Peirce, Volumes {II} (1932)

\bibitem{rocktaschel_end_end_2017}
Rocktäschel, T., Riedel, S.: End-to-end {Differentiable} {Proving}. In: Advances in {Neural} {Information} {Processing} {Systems}. vol.~30 (2017)

\bibitem{rocktaschel_injecting_2015}
Rocktäschel, T., Singh, S., Riedel, S.: Injecting {Logical} {Background} {Knowledge} into {Embeddings} for {Relation} {Extraction}. In: Mihalcea, R., Chai, J., Sarkar, A. (eds.) Proceedings of the 2015 {Conference} of the {North} {American} {Chapter} of the {Association} for {Computational} {Linguistics}: {Human} {Language} {Technologies}. pp. 1119--1129 (2015)

\bibitem{sato_embedding_2017}
Sato, T.: Embedding {Tarskian} {Semantics} in {Vector} {Spaces}. In: {AAAI}-17 {Workshop} on {Symbolic} {Inference} and {Optimization} (2017)

\bibitem{sato_linear_2017}
Sato, T.: A linear algebraic approach to datalog evaluation. Theory and Practice of Logic Programming  \textbf{17}(3),  244--265 (2017)

\bibitem{sato_abducing_2018}
Sato, T., Inoue, K., Sakama, C.: Abducing {Relations} in {Continuous} {Spaces}. In: Proceedings of the {Twenty}-{Seventh} {International} {Joint} {Conference} on {Artificial} {Intelligence}. pp. 1956--1962 (2018)

\bibitem{scholz_fast_2016}
Scholz, B., Jordan, H., Subotić, P., Westmann, T.: On fast large-scale program analysis in {Datalog}. In: Proceedings of the 25th {International} {Conference} on {Compiler} {Construction}. pp. 196--206. {CC} 2016 (2016)

\bibitem{sun_optimizing_2025}
Sun, Y., Shovon, A.R., Gilray, T., Kumar, S., Micinski, K.: Optimizing {Datalog} for the {GPU}. In: Proceedings of the 30th {ACM} {International} {Conference} on {Architectural} {Support} for {Programming} {Languages} and {Operating} {Systems}, {Volume} 1. pp. 762--776. {ASPLOS} '25, New York, NY, USA (2025)

\bibitem{tekle_more_2011}
Tekle, K.T., Liu, Y.A.: More efficient datalog queries: subsumptive tabling beats magic sets. In: Proceedings of the 2011 {ACM} {SIGMOD} {International} {Conference} on {Management} of data. pp. 661--672 (2011)

\bibitem{toutanova_observed_2015}
Toutanova, K., Chen, D.: Observed versus latent features for knowledge base and text inference. In: Allauzen, A., Grefenstette, E., Hermann, K.M., Larochelle, H., Yih, S.W.t. (eds.) Proceedings of the 3rd {Workshop} on {Continuous} {Vector} {Space} {Models} and their {Compositionality}. pp. 57--66 (2015)

\bibitem{tarnlund_horn_1977}
Tärnlund, S.: Horn {Clause} {Computability}. BIT Numerical Mathematics  \textbf{17}(2),  215--226 (1977), publisher: BIT Computer Science and Numerical Mathematics

\bibitem{valizadeh_search_based_2023}
Valizadeh, M., Berger, M.: Search-{Based} {Regular} {Expression} {Inference} on a {GPU}. In: Proceedings of the ACM on Programming Languages. vol.~7, pp. 1317--339 (2023)

\bibitem{vieille_recurswe_1986}
Vieille, L.: Recursive axioms in deductive databases: The query/subquery approach. Decision Support Systems  \textbf{2}(3), ~281 (1986)

\bibitem{wielemaker:2011:tplp}
Wielemaker, J., Schrijvers, T., Triska, M., Lager, T.: {SWI-Prolog}. Theory and Practice of Logic Programming  \textbf{12}(1-2),  67--96 (2012)

\bibitem{yang_differentiable_2017}
Yang, F., Yang, Z., Cohen, W.W.: Differentiable {Learning} of {Logical} {Rules} for {Knowledge} {Base} {Reasoning}. In: Advances in {Neural} {Information} {Processing} {Systems}. vol.~30 (2017)

\bibitem{zhou_language_2012}
Zhou, N.F.: The language features and architecture of {B}-{Prolog}. Theory and Practice of Logic Programming  \textbf{12}(1-2),  189--218 (2012)

\end{thebibliography}

\renewcommand{\thesection}{\Alph{section}}
\appendix
\section*{Appendix}
\section{Notation}
A mode indicator precedes every argument description. For instance, the $sort(+List, -Sorted)$ indicator says this method takes a $List$ and returns it $Sorted$.  An argument mode indicator shows the intended use of the argument and output for a predicate. An argument preceded by ``+'' means that at call time, it must be instantiated to a term. Arguments preceded by ``-'' are output arguments. For additional notations, we refer the readers to SWI-Prolog \cite{swipl}.

\section{BMLP source code and examples}

We refer readers to the GitHub repository (https://github.com/lun-ai/BMLP.git) for the complete implementation of BMLP methods and modules. To showcase, we use the datalog program $ex.pl$ with facts in Example 1 in Section 4.2. The monadic $node$ predicate describes the type of constants:
\begin{verbatim}
node(a). node(b). node(c).
edge(a,b). edge(b,c).
\end{verbatim}

\subsection{BMLP with Python}

BMLP on GPU aims to be Pythonic through PyTorch. This is implemented in the $BMLP\_GPU$ submodule. Several examples can be found in the following section, where many methods are also provided in this Python implementation. 

\begin{verbatim}
import torch
from bmlp.core.tensor import *
from bmlp.core.utils import *

# Extract relations from a Prolog file and create matrices
unary, binary = extract_relations_from_file('bmlp/tests/ex_p0.pl')
data = create_matrices_from_relations('edge',
                                      ['node', 'node'],
                                      unary, binary)

# Convert the data to a PyTorch tensor and apply RMS
m1 = torch.tensor(data['matrix'],
                  dtype=D_TYPE)
m2 = RMS(m1)

# Print the result
print('RMS result:\n', m2)
print_relations(convert_matrix_to_relations(
    m2, data['index_to_entity']['node']))
\end{verbatim}

\subsection{BMLP with SWI-Prolog}
BMLP methods and boolean operations are callable from bmlp.pl as a module in SWI-Prolog.
This module imports source code from the bmlp/ folder to support boolean matrix operations.

\begin{verbatim}
:- use_module(bmlp).

bmlp_ex :- init('./temp'),
           compile('bmlp/tests/ex_p0.pl',db(edge,[node,node],_),M1),
           rms(M1,M2,[output_name='path']),
  
           lm_print(M2).
           
% Calling the goal bmlp_ex prints the output from the BMLP-RMS module.
% path3 (3x3):  
%          a b c
% a       |0 1 1| % path(a, b). path(a, c). 
% b       |0 0 1| % path(b, c).
% c       |0 0 0|
\end{verbatim}

One can convert a matrix into facts by adding the following body. 
This would print out the list [path(a,b), path(a,c), path(b,c)].
\begin{verbatim}
...
lm_to_facts(M2,Fs),
writeln(Fs).
\end{verbatim}

\section{Time complexity}
\label{app:time_complexity}

\begin{proposition} 
    Given a $n \times n$ Boolean matrix $\textbf{R}_{1}$, Algorithm \ref{alg:algorithm2} has a time complexity $O(n^3 log_2 n)$ for computing the transitive closure $\textbf{R}_{2}^*$.
    \label{proposition:1}
\end{proposition}
\begin{proof}
    The time complexity of native multiplications between two $n \times n$ matrices is $O(n^3)$. Finding the closure requires $O(n)$ Boolean matrix multiplications \cite{fischer_Boolean_1971}. Since Algorithm \ref{alg:algorithm2} takes steps to the power of two, this number is reduced to $O(log_2 n)$, giving an overall $O(n^3 log_2 n)$ bitwise operations.
\end{proof}

\begin{proposition}
    Given a $1 \times n$ vector $\textbf{v}$, one $n \times n$ Boolean matrices $\textbf{R}_{1}$, Algorithm \ref{alg:algorithm1} has a time complexity $O(n^3)$ for computing the transitive closure $\textbf{v}^*$.
    \label{proposition:2}
\end{proposition}
\begin{proof}
    The ``while'' loop in Algorithm \ref{alg:algorithm1} runs with $O(n^2)$ bitwise operations due to multiplications between a vector and a $n \times n$ matrix. Until we find the transitive closure, at least one fact containing a new constant needs to be found at each iteration. Therefore, there are at most $n$ iterations which require $O(n^3)$ bitwise operations. 
\end{proof}

\section{Experimentation details}

We record the CPU runtime statistics of all methods in evaluating queries, and repeat each experiment 10 times. B-Prolog and SWI-Prolog have methods, $cputime$ and $call\_time$, respectively for reporting runtimes. Runtime results were extracted from Clingo and Souffle command line and log outputs. Running experiments using the script provided in the GitHub repository generates data for DG, DG+partial and FB15k-237 benchmarks. BMLP modules have been used as shown in the previous section. We show programs evaluated by non-BMLP systems.

\subsection{Directed graphs}

Directed graph datasets, DG and DG+partial, allow us to investigate the runtime variations as a function of the number of facts and the number of constants in the datalog program. These datasets contain dyadic $edge$ and monadic $node$ facts. We only show programs for the benchmark DG since we replaced parts of those programs for the DG+partial benchmark (see commented sections below). Prolog systems do not directly compute the least model, so for B-Prolog and SWI-Prolog, we call the clause $closure$ in the following program to compute derivable facts of $path$ and store them in a table:
\begin{verbatim}
    % enable tabling
    :- table path/2.
    
    path(A,B):-edge(A,B).
    path(A,B):-edge(A,C),path(C,B).

    closure:-path(_C1,_C2), fail.
    % In dataset DB+partial
    % the above line is replaced
    % by the following
    % closure :- path(c1,_C2), fail.
    
    closure.
\end{verbatim}

It is more straightforward with Clingo, which searches for models. We used the answer set program to evaluate and show only $path$:
\begin{verbatim}
    path(A,B) :- edge(A,B).
    path(A,B) :- edge(A,C), path(C,B).
    
    #show path/2.
    % In dataset DB+partial
    % the above line is replaced
    % by the following
    % path_c1(Y) :- path(c1,Y).
    % #show path_c1/1.
\end{verbatim}

Souffle specialises in datalog evaluation and requires definitions of input and output predicates. The input is $edge$ and the output is $path$:
\begin{verbatim}
    .decl edge(n:symbol,m:symbol)
    .input edge
    .decl path(n:symbol,m:symbol)

    path(x,y):-edge(x,y).
    path(x,y):-edge(x,z),path(z,y).
    
    .output path   
    // In dataset DB+partial
    // the above line is replaced
    // by the following
    // .decl closure(n:symbol,m:symbol)
    // closure("c1",x) :- path("c1",x).
    // .output closure
\end{verbatim}

\subsection{Freebase 15K}

We use the Freebase 15K variant (FB15k-237) from \cite{toutanova_observed_2015} because it does not contain the inverse of relations. BMLP architecture would perform with boolean matrix transpose operations to invert relations. In addition, the extensive set of constants but a relatively small amount of facts provides a good stress test for BMLP methods. As mentioned in Section 6.1, we extracted the $contains$ and $adjoins$ facts from FB15k-237 and constants are wrapped by $location$ predicate. 

Non-BMLP systems used the program $\mathcal{P}_3 \cup \mathcal{P}_4 \cup \mathcal{P}_5$ in Section 4.4. We focused on grounding $isForeign$ predicate but derivable facts of involved predicates were also computed by tabling, Clingo or BMLP methods. For SWI-Prolog, we allocated 10GB table space using $set\_prolog\_flag$. We did not have to allocate table space for B-Prolog because it automatically expands it. For both B-Prolog and SWI-Prolog, we used:
\begin{verbatim}
    :- table isForeign/2,
             hasPlace/2,
             indirectlyPartOf/2.
    
    hasPlace(X,Y) :- contains(X,Y).
    hasPlace(X,Y) :- contains(X,Z), 
                     hasPlace(Z,Y).
    indirectlyPartOf(X,Y) :- 
                adjoins(X,Y).
    indirectlyPartOf(X,Y) :- 
                adjoins(Y,X).
    indirectlyPartOf(X,Y) :- 
                hasPlace(Z,X), 
                indirectlyPartOf(Z,Y).
    isForeign(X,Y) :- 
                location(X), 
                location(Y), 
                \+indirectlyPartOf(X,Y).
    
    closure :- isForeign(_C1,_C2), fail.
    closure.
\end{verbatim}

Clingo has to ground the following program which takes a lot of memory. This has been its bottleneck when running on limited memory space compared with BMLP and other non-BMLP methods. 
\begin{verbatim}
    hasPlace(X, Y) :- contains(X, Y).
    hasPlace(X, Y) :- contains(X, Z), 
                      hasPlace(Z, Y).

    indirectlyPartOf(X, Y) :- 
            adjoins(X, Y).
    indirectlyPartOf(X, Y) :- 
            adjoins(Y, X).
    indirectlyPartOf(X, Y) :- 
            hasPlace(Z, X), 
            indirectlyPartOf(Z, Y).

    isForeign(X, Y) :- 
            location(X), 
            location(Y), 
            not indirectlyPartOf(X, Y).
    #show isForeign/2.
\end{verbatim}

For Souffle, the inputs are $contains$, $adjoins$ (capitalised to avoid clashing with Souffle build-in methods) and $location$. The output is $isForeign$. 
\begin{verbatim}
    .decl CONTAINS(x:symbol,y:symbol)
    .input CONTAINS
    .decl hasPlace(x:symbol,y:symbol)
    hasPlace(x,y) :- CONTAINS(x,y).
    hasPlace(x,y) :- CONTAINS(x,z),
                     hasPlace(z,y).

    .decl ADJOINS(x:symbol,y:symbol)
    .input ADJOINS
    .decl indirectlyPartOf(x:symbol,
                           y:symbol)
    indirectlyPartOf(x,y) :- 
            ADJOINS(x,y).
    indirectlyPartOf(x,y) :- 
            ADJOINS(y,x).
    indirectlyPartOf(x,y) :- 
            hasPlace(z,x),
            indirectlyPartOf(z,y).

    .decl location(x:symbol)
    .input location
    .decl isForeign(x:symbol,
                    y:symbol)
    isForeign(x,y) :- 
            location(x),
            location(y),
            !indirectlyPartOf(x,y).

    .output isForeign
\end{verbatim}

\section{Results in Figures}

\begin{table*}[t]
\scriptsize
    \centering
    \caption{Runtime in seconds. Prolog systems have tabling enabled. $p_t$ varies in steps \{0.0001, 0.001, 0.01, 0.1, 0.5, 1\} and $n$ = 5000. OT means a runtime over 15000 seconds. }
    \begin{tabular}{ c @{\hskip 4pt} c @{\hskip 3pt} | c @{\hskip 4.5pt} c @{\hskip 4.5pt} c @{\hskip 4.5pt} c @{\hskip 4.5pt} c }
    \hline
    \multicolumn{2}{c|}{Datasets} & \multicolumn{5}{c}{Methods} \\ 
    Name & \multicolumn{1}{c|}{$p_t$} & BMLP-RMS & B-Prolog & Clingo & Souffle & SWI-Prolog \\ \hline
    & 0.0001 & $2.58 \pm 0.33$ & $\textbf{0.004} \pm 0.001$ & $0.067 \pm 0.016$ & $0.013 \pm 0.001$ & $0.013 \pm 0.005$ \\
    & 0.001 & $333.63 \pm 13.17$ & $42.41 \pm 0.33$ & $96.72 \pm 3.17$ & $\textbf{11.67} \pm 0.31$ & $104.59 \pm 0.88$ \\
    & 0.01 & $281.69 \pm 5.49$ & $188.23 \pm 1.78$ & $615.56 \pm 20.70$ & $\textbf{61.43} \pm 0.29$ & $843.00 \pm 11.61$ \\
    DG & 0.1 & $\textbf{248.78} \pm 4.47$ & $1810.54 \pm 5.13$ & $5942.41 \pm 317.45$ & $324.91 \pm 0.62$ & $8838.07 \pm 41.71$ \\
    & 0.5 & $\textbf{301.63} \pm 6.39$ & $10148.88 \pm 78.73$ & OT & $2810.35 \pm 7.79$ & OT  \\ 
    & 1.0 & $\textbf{266.36} \pm 9.59$ & OT & OT & $1518.34 \pm 4.06$ & OT \\ \hline
    \end{tabular}
    \label{tab:fig4}
\end{table*}

\begin{table*}[t]
\scriptsize
    \centering
    \caption{Runtime in seconds. Prolog systems have tabling enabled. Here, $p_t = 0.001$ and $n$ varies in steps \{1000, 2000, 3000, 4000, 5000\}. OT means a runtime over 15000 seconds. B-Prolog can only print small decimal numbers as 0 for the CPU time statistics.}
    \begin{tabular}{ c @{\hskip 4pt} c @{\hskip 3pt} | c @{\hskip 4.5pt} c @{\hskip 4.5pt} c @{\hskip 4.5pt} c @{\hskip 4.5pt} c }
    \hline
    \multicolumn{2}{c|}{Datasets} & \multicolumn{5}{c}{Methods} \\ 
    Name & \multicolumn{1}{c|}{$n$} & BMLP-SMP & B-Prolog & Clingo & Souffle & SWI-Prolog \\ \hline
    & 1000 & $0.03 \pm 0.01$ & $\textbf{0.0000} \pm 0.0000$ & $0.02 \pm 0.02$ & $0.02 \pm 0.01$ & $0.0005 \pm 0.0003$ \\
    & 2000 & $\textbf{0.08} \pm 0.02$ & $0.94 \pm 0.11$ & $4.67 \pm 0.26$ & $0.86 \pm 0.07$ & $2.78 \pm 1.89$ \\
    DG+partial & 3000 & $\textbf{0.14} \pm 0.02$ & $4.40 \pm 0.28$ & $17.71 \pm 0.55$ & $2.64 \pm 0.18$ & $20.7 \pm 0.98$ \\ 
    & 4000 & $\textbf{0.25} \pm 0.04$ & $14.10 \pm 0.4$ & $44.59 \pm 0.45$ & $5.59 \pm 0.28$ & $51.91 \pm 1.00$ \\ 
    & 5000 & $\textbf{0.38} \pm 0.04$ & $27.36 \pm 0.56$ & $73.41 \pm 0.71$ & $9.45 \pm 0.37$ & $82.21 \pm 1.78$ \\ \hline
    \end{tabular}
    \label{tab:fig5}
\end{table*}

Figure 4 and 5 in the paper visualise runtime changes over the number of facts and constants. Table \ref{tab:fig4} includes runtime data in Figure 4. The number of constants $n$ = 5000 and the probability $p_t$ of an edge between two nodes in a directed graph vary in steps \{0.0001, 0.001, 0.01, 0.1, 0.5, 1\}. Table \ref{tab:fig5} contains runtime results in Figure 5. The probability $p_t$ of facts is 0.001, and the number of constants $n$ varies in steps \{1000, 2000, 3000, 4000, 5000\}. Since B-Prolog can only print small decimal numbers as 0 for the CPU time statistics, we set the runtime lower bound to 0.001 seconds so that values can be plotted with the log scale in Figure 4 and 5. 


\end{document}